\theoremstyle{plain}
\newtheorem{thm}{\textbf{Theorem}}
\newtheorem{prop}{\textbf{Proposition}}
\theoremstyle{definition}
\newtheorem{ass}{\textbf{Assumption}}
\theoremstyle{remark}
\newtheorem{rem}{\textbf{Remark}}
\begin{document}

\title{\LARGE \bf
Online decentralized decision making with inequality constraints: an ADMM approach
}
\author{Yuxiao Chen$^{1}$, Mario Santillo$^{2}$, Mrdjan Jankovic$^{2}$,
and Aaron D. Ames$^{1}$
\thanks{$^{1}$ Yuxiao Chen and Aaron D. Ames are with Department of Mechanical and Civil Engineering, California Institute of Technology, Pasadena, CA, USA {\tt\small chenyx,ames@caltech.edu}}
\thanks{$^{2}$ Mario Santillo and Mrdjan Jankovic are with Ford Research and Advanced Engineering, Dearborn, MI, USA {\tt\small msantil3,mjankov1@ford.com}}

}

\maketitle
\thispagestyle{empty}

%\IEEEtitleabstractindextext{
\begin{abstract}
We discuss an online decentralized decision making problem where the agents are coupled with affine inequality constraints. Alternating Direction Method of Multipliers (ADMM) is used as the computation engine and we discuss the convergence of the algorithm in an online setting. To be specific, when decisions have to be made sequentially with a fixed time step, there might not be enough time for the ADMM to converge before the scenario changes and the decision needs to be updated. In this case, a suboptimal solution is employed and we analyze the optimality gap given the convergence condition. Moreover, in many cases, the decision making problem changes gradually over time. We propose a warm-start scheme to accelerate the convergence of ADMM and analyze the benefit of the warm-start. The proposed method is demonstrated in a decentralized multiagent control barrier function problem with simulation.
\end{abstract}
\begin{IEEEkeywords}
Decentralized control; ADMM; Control barrier Functions
\end{IEEEkeywords}

\section{Introduction}\label{sec:intro}
Multiagent systems and networked systems make up a large portion of the autonomous applications, such as multi-robot systems, autonomous vehicles, and power networks. In many cases, due to the large number of agents and the distributed nature, centralized decision making is not implementable and people resort to decentralized algorithms. These decentralized decision making problems are usually formulated as distributed optimization problems, which have been studied for decades \cite{tsitsiklis1986distributed,nedic2014distributed}. One typical setup is to optimize over a summation of functions depending on different subsets of the decision variables via local optimization and communication between agents. The algorithm structure and communication protocol largely depend on the coupling graph topology, including the master-worker setting \cite{agarwal2011distributed}, fully connected setting, and fully distributed setting. Time varying topologies and time delays have also been considered in the literature \cite{nedic2014distributed,yang2016distributed}.

%One particularly well-studied problem is the consensus problem \cite{olfati2004consensus,cortes2006robust} where various gradient-flow type algorithms were proposed.

Alternating Direction Method of Multipliers (ADMM) is one class of optimization algorithms that receive increasing popularity as a simple yet effective framework for distributed optimization \cite{boyd2011distributed}. Comparing to the gradient based methods such as \cite{nedic2014distributed}, it is based on dual ascent, which solves the primal problem locally, and uses the dual variable to coordinate the local optimizations at each node. A common way of applying ADMM to distributed optimization is to formulate it as a consensus problem, where each agent is associated with a local objective function, and are coupled with other agents in the system by a consensus constraint. The idea is to keep local copies of variables shared among multiple objective functions and enforcing consistency of the local copies, which fits right into the ADMM formulation. There exist a plethora of different variations of ADMM \cite{chen2018validating} with different communication topologies and convergence guarantees \cite{shi2014linear,zhang2014asynchronous}.

In this paper we consider the problem of decentralized decision making with multiple agents coupled by linear inequality constraints. The goal is to search for the solution that minimizes the violation. Examples of such problems include motion planning for multiple agents \cite{borrmann2015control,chen2020guaranteed,murray2007recent}, decentralized model predictive control \cite{bemporad2011decentralized}, and decentralized coordination for power grids \cite{li2016decentralized,fang2019decentralized}. Moreover, we consider the case where the decentralized decision making problem is solved repeatedly with gradually changing data, i.e., the inequality constraints and local objective functions changes gradually between successive time steps. In online decision-making problems, the decision need to be made within a limited amount of time, which can be several seconds, or even milliseconds, depending on the application. The motivating example is the obstacle avoidance problem for autonomous vehicles using control barrier functions, where the actions need to be determined in the order of 100 milliseconds. The main contributions of this paper are (1) we propose a consensus-based framework for online decentralized decision making with coupling inequality constraints (2) we apply the Prox-JADMM algorithm with warm-start for online decentralized decision making and analyzed the optimality gap for unconverged solutions and the continuity of the optimal solution under changing problem data.

\newsec{Paper structure.} Section \ref{sec:setup} describes the problem setup, Section \ref{sec:ADMM} reviews the ADMM setup and the convergence result, and Section \ref{sec:main} presents the main result, the warm-start Prox-JADMM algorithm for online decentralized decision making. We show the application of the proposed algorithm on a autonomous vehicle obstacle avoidance example in Section \ref{sec:result} and finally we conclude in Section \ref{sec:conclusion}.
\vspace{-0.2cm}
\section{Preliminaries and problem setup}\label{sec:setup}

We begin with some definitions. A function $f:\mathcal{X}\to\mathbb{R}$ is convex if $\forall x_1,x_2\in \mathcal{X}$, $\lambda\in [0,1]$, $f(\lambda x_1+(1-\lambda)x_2)\le\lambda f(x_1)+(1-\lambda)f(x_2)$. Given a convex function $f$, if $f$ is differentiable at $x$, then $\nabla f(x)$ denote the gradient of $f$ at $x$; $\partial f(x)$ denotes the subgradient of $f$, which is defined as
\begin{equation*}
  \partial f(x)\coloneqq \{c\in\mathbb{R}^n\mid\forall x_1\in\mathcal{X},f(x_1)-f(x)\ge c^\intercal(x_1-x)\}.
\end{equation*}
When $f$ is differentiable at $x$, $\partial f=\{\nabla f(x)\}$. A differentiable $f$ is strongly convex with parameter $\sigma$ if $\forall x,y\in\mathcal{X}, f(y)-f(x)\ge \nabla f(x)^\intercal(y-x)+\frac{\sigma}{2}||x-y||^2$. $|\cdot|$ denotes the element-wise absolute value of a vector or a matrix, $||\cdot||$ denotes the 2-norm of a vector, $||\cdot||_F$ denotes the Frobenius norm.

We consider a fully decentralized setting with $N$ agents, where $N$ may vary over time and the algorithm does not depend on $N$. Let $x_i\in\mathbb{R}^{n_i}$ be the decision variable for the $i$th agent, each agent is associated with the local objective function $f_i:\mathbb{R}^{n_i}\to\mathbb{R}$. A coupling linear inequality is a 3-tuple $(s,A_s,b_s)$ where $s\subseteq\{1,2,...,N\}$ is a subset of agents, $A_s$ and $b_s$ defines the following coupling constraint:
\begin{equation*}
  \sum\nolimits_{i\in s}A_s^i x_i\le b_s,
\end{equation*}
where $A_s^i$ is the block of $A_s$ corresponding to $x_i$. Two agents $x_i$ and $x_j$ are neighbors of each other if there exists a coupling constraint $(s,A_s,b_s)$ where $s$ contains both $i$ and $j$, and $\mathcal{N}_i$ denotes the neighbor set of agent $i$, with cardinality $N_i$. Let $\mathcal{C}$ denote the set of all coupling constraints, the optimization problem we consider is then:
\begin{equation}\label{eq:setup}
\begin{aligned}
  \mathop{\min}\limits_{x_1,...,x_N}& \sum\nolimits_{i} {f_i(x_i)}\\
  \mathrm{s.t.}&~x_i\in\mathcal{X}_i,~\forall (s,A_s,b_s)\in \mathcal{C},~\sum\nolimits_{i\in s}A_s^i x_i\le b_s,
\end{aligned}
\end{equation}
where $\mathcal{X}_i$ is the domain of $x_i$, assumed to be convex with a nonempty interior. The compact form of \eqref{eq:setup} is then
\begin{equation}\label{eq:compact_setup}
  \begin{aligned}
  \mathop{\min}\limits_{x_1,...,x_N}& \sum\nolimits_{i} {f_i(x_i)}\\
  \mathrm{s.t.}&~x_i\in \mathcal{X}_i,~Ax\le b,
\end{aligned}
\end{equation}
where $x=\begin{bmatrix}x_1^\intercal &  \cdots  & x_N^\intercal\end{bmatrix}^{\intercal}$ is the vector consisting of all $x_i$, $A$ and $b$ matrices are obtained by stacking the $A_s$ and $b_s$ matrices on the proper dimensions. In practice, \eqref{eq:setup} can be infeasible and we solve the following relaxed problem instead:
\begin{equation}\label{eq:relaxed_setup_cen}
\resizebox{0.9\hsize}{!}{$
  \mathop{\min}\limits_{x_1\in\mathcal{X}_1,...,x_N\in\mathcal{X}_N} \sum\nolimits_{i} {f_i(x_i)}+\beta\sum\limits_{(s,A_s,b_s)\in \mathcal{C}}\mathds{1}^\intercal\max\{\mathbf{0},\sum\limits_{i\in s}A_s^i x_i-b_s\}
$}
\end{equation}
where the maximum is taken entry-wise, $\beta>0$ is the penalty on constraint violation, and $\mathds{1}$ is a vector of all ones. The penalty term is a piecewise affine convex function of the $\{x_i\}$.

\begin{rem}
  In \cite{wang2011control}, the inequality constraint is enforced with a logarithmic penalty. However, since we consider the cases where \eqref{eq:compact_setup} may be infeasible, a linear penalty is chosen instead. Assuming $\forall x,c_1\le\sum\nolimits_{i} {f_i(x_i)}\le c_2$, if \eqref{eq:compact_setup} is feasible, then the solution to \eqref{eq:relaxed_setup_cen} satisfies $\mathds{1}^\intercal\max\{\mathbf{0},\sum\limits_{i\in s}A_s^i x_i-b_s\}\le \frac{c_2-c_1}{\beta}$. This means constraint violation due to relaxation can be made arbitrarily small by increasing $\beta$.
\end{rem}

To solve \eqref{eq:relaxed_setup_cen} decentrally, the local copy idea is adopted. To be specific, each agent keeps local copies of the decision variables of its neighbors, and additional consensus constraint is added so that these local copies agree with the actual variable. Let $x_j^i$ denote the local copy of $x_j$ kept by agent $i$, $\mathbf{x}_i\coloneqq [x_i^\intercal,{x_{j_1}^i}^\intercal,...,{x^i_{j_{N_i}}}^\intercal]^\intercal\in\mathbb{R}^{\mathbf{n}_i}$ be the vector consisting of $x_i$ and all the local copies of agent $i$'s neighbors, and let $\mathbf{A}^i\mathbf{x}_i\le\mathbf{b}^i$ be the compact form of the constraint
\begin{equation*}
  \forall (s,A_s,b_s)\in \mathcal{C},i\in s, A_s^i x_i+\sum\nolimits_{j\in s,j\neq i}A_s^j x_j^i\le b_s,
\end{equation*}
$\mathbf{A}^i \in\mathbb{R}^{m_i\times\mathbf{n}_i}$ and $\mathbf{b}^i\in\mathbb{R}^{\mathbf{n}_i}$. Then define the augmented local objective function as
\begin{equation}\label{eq:aug_obj}
  F_i(\mathbf{x}_i)= {f_i(x_i)}  +\beta (w^i)^\intercal \max\{\mathbf{0},\mathbf{A}^i \mathbf{x}_i-\mathbf{b}^i\},
\end{equation}
where $w^i\in\mathbb{R}^{m_i}$ is a vector with $w^i_j=1/|s_j|$, $s_j$ being the subset of agents corresponding to the $j$th row of $\mathbf{A}^i$, and $|s_j|$ being its cardinality. Since the constraint is shared by $|s_j|$ agents, each local objective function gets $1/|s_j|$ of the total penalty.
The decentralized optimization problem is then:
\begin{equation}\label{eq:relaxed_setup}
\begin{aligned}
  \mathop{\min}\limits_{\mathbf{x}_1,...,\mathbf{x}_N}& \sum\nolimits_{i} F_i(\mathbf{x}_i)\\
  \mathrm{s.t.}&~\forall i, x_i\in\mathcal{X}_i,~\forall j\in\mathcal{N}_i,x_j^i=x_j,
\end{aligned}
\end{equation}
which shall be solved with ADMM.
\section{Review of ADMM}\label{sec:ADMM}
We briefly review the ADMM framework.

\newsec{Standard ADMM.} The standard ADMM algorithm considers the following problem:
\begin{equation}\label{eq:admm}
  \begin{aligned}
  \mathop {\min }\limits_{x,z} &f(x) + g(z)  \\
  \mathrm{s.t.}&\;Ax + Bz - c = 0,
\end{aligned}
\end{equation}
where $x\in\mathbb{R}^n,z\in\mathbb{R}^m$ are the decision variables, $f:\mathbb{R}^n\to\mathbb{R},g:\mathbb{R}^n\to\mathbb{R}$ are convex objective functions, $A\in\mathbb{R}^{p\times n},B\in\mathbb{R}^{p\times m}$ and $c\in\mathbb{R}^p$ defines the linear equality constraint. The ADMM algorithm follows a dual ascent procedure with the following augmented Lagrangian:
\begin{equation*}
  \resizebox{1\hsize}{!}{$
  \mathcal{L}_\rho(x,z,y) = f(x) + g(z)+y^\intercal(Ax + Bz - c)+\frac{\rho}{2}\left\|Ax + Bz - c\right\|^2,
  $}
\end{equation*}
where $y$ is the Lagrange multiplier of the equality constraint, $\rho>0$ introduces the quadratic terms that facilitates the convergence of the algorithm. The ADMM updates the follows
\begin{equation}\label{eq:basic_admm}
  \begin{aligned}
  x^{k+1}&\coloneqq \mathop{\arg\min}\nolimits_{x} \mathcal{L}_\rho(x,z^k,y^k)\\
  z^{k+1}&\coloneqq \mathop{\arg\min}\nolimits_{z} \mathcal{L}_\rho(x^{k+1},z,y^k)\\
  y^{k+1}&\coloneqq y^k + \rho(Ax^{k+1} + Bz^{k+1} - c).
  \end{aligned}
\end{equation}
More detail about the ADMM algorithm, including the convergence proof, can be found in \cite{boyd2011distributed}.

\newsec{Multiblock ADMM and convergence.} The standard ADMM handles the case with $N=2$ coupled variables, the cases where $N\ge3$ is often referred to as multiblock ADMM, which solves the following problem:
\begin{equation}\label{eq:multiblock_admm}
  \begin{aligned}
  \mathop {\min }\limits_{\mathbf{x}_i} &\sum\nolimits_{i=1}^N F_i(\mathbf{x}_i)  \\
  \mathrm{s.t.}&\;\sum\nolimits_{i=1}^N A_i \mathbf{x}_i - c = 0.
\end{aligned}
\end{equation}
There are several different setups for multiblock ADMM. A direct extension of the standard ADMM algorithm to the multiblock case is called a Gauss-Seidel ADMM. The Lagrangian for multiblock ADMM can be directly extended from the standard case:
\begin{equation*}
  \resizebox{0.9\hsize}{!}{$
  \mathcal{L}_\rho(\mathbf{x}_1,...\mathbf{x}_N,y) = \sum\limits_{i=1}^N f_i(\mathbf{x}_i)+y^\intercal(\sum\limits_{i=1}^N A_i \mathbf{x}_i-c)+\frac{\rho}{2}\left\|\sum\limits_{i=1}^N A_i \mathbf{x}_i-c\right\|^2.
  $}
\end{equation*}

The Gauss-Seidel iteration is then
\begin{equation}\label{eq:GS_admm}
  \begin{aligned}
  \mathbf{x}_1^{k+1}&\coloneqq \mathop{\arg\min}\nolimits_{\mathbf{x}_1} \mathcal{L}_\rho(\mathbf{x}_1,\mathbf{x}_2^k,...,\mathbf{x}_N^k,y^k)\\
  \mathbf{x}_2^{k+1}&\coloneqq \mathop{\arg\min}\nolimits_{\mathbf{x}_2} \mathcal{L}_\rho(\mathbf{x}_1^{k+1},\mathbf{x}_2,\mathbf{x}_3^k,...,\mathbf{x}_N^k,y^k)\\
  &\quad\quad\quad\cdots\\
  \mathbf{x}_N^{k+1}&\coloneqq \mathop{\arg\min}\nolimits_{\mathbf{x}_N} \mathcal{L}_\rho(\mathbf{x}_1^{k+1},...\mathbf{x}_{N-1}^{k+1},\mathbf{x}_N,y^k)\\
  y^{k+1}&\coloneqq y^k + \rho(\sum\limits_{i=1}^N A_i \mathbf{x}^{k+1}_i - c).
  \end{aligned}
\end{equation}
Note the update of $\mathbf{x}_i$ leverages the updated value of $\mathbf{x}_1,...\mathbf{x}_{i-1}$, which helps the convergence of the algorithm due to the updated information. See \cite{lin2015global} for convergence analysis.

However, as is clear from the algorithm, Gauss-Seidel ADMM cannot be parallelized and is not suitable for a fully decentralized setting. Jacobian ADMM refers to the ADMM algorithms that simultaneously update all $x_i$ with the value from the last iteration, which can be parallelized. However, due to the lack of updated information, Jacobian ADMM is less likely to converge comparing to Gauss-Seidel ADMM. To facilitate convergence, the common strategy is to damp the changing of the variables between iterations, such as adding an underrelaxation \cite{he2015full} and adding a proximal term to the Lagrangian \cite{deng2017parallel}. We shall adopt the latter, named Prox-JADMM, as our ADMM strategy. Prox-JADMM adds a proximal term to primal update, and the ADMM update follows the following procedure:
\begin{equation}\label{eq:prox_admm}
\resizebox{0.9\hsize}{!}{$
  \begin{aligned}
  \mathbf{x}_1^{k+1}&\coloneqq \mathop{\arg\min}\nolimits_{\mathbf{x}_1} \mathcal{L}_\rho(\mathbf{x}_1,\mathbf{x}_2^k,...,\mathbf{x}_N^k,y^k) + \left\|\mathbf{x}_1-\mathbf{x}_1^k\right\|_{P_1}^2\\
  \mathbf{x}_2^{k+1}&\coloneqq \mathop{\arg\min}\nolimits_{\mathbf{x}_2} \mathcal{L}_\rho(\mathbf{x}_1^k,\mathbf{x}_2,\mathbf{x}_3^k,...,\mathbf{x}_N^k,y^k)+\left\|\mathbf{x}_2-\mathbf{x}_2^k\right\|_{P_2}^2\\
  &\quad\quad\quad\cdots\\
  \mathbf{x}_N^{k+1}&\coloneqq \mathop{\arg\min}\nolimits_{\mathbf{x}_N} \mathcal{L}_\rho(\mathbf{x}_1^k,...\mathbf{x}_{N-1}^k,\mathbf{x}_N,y^k)+\left\|\mathbf{x}_N-\mathbf{x}_N^k\right\|_{P_N}^2\\
  y^{k+1}&\coloneqq y^k + \gamma\rho(\sum\limits_{i=1}^N A_i \mathbf{x}^{k+1}_i - c),
  \end{aligned}
  $}
\end{equation}
where $\gamma>0$, and $\left\|\cdot\right\|_{P_i}$ is the 2-norm induced by $P_i\succeq 0$.

It is shown in \cite{deng2017parallel} that when $\rho,\gamma$ and $\{P_i\}$ satisfies
\begin{equation}\label{eq:conv_cond}
  \left\{ {\begin{array}{*{20}{l}}
  {P_i\succ \rho(\frac{1}{\epsilon_i}-1)A_i^\intercal A_i},~~i=1,2,...,N \\
  {\sum\nolimits_{i=1}^N\epsilon_i\le 2-\gamma},
\end{array}} \right.
\end{equation}
for some $\epsilon_i>0$, the Prox-JADMM converges with rate $o(1/k)$. However, it is obvious that \eqref{eq:conv_cond} becomes increasingly difficult to satisfy as $N$ increases. In a fully decentralized problem such as autonomous driving, there may be millions of agents (considering all vehicles in the road system) and typically no agent knows the total number of agents. Fortunately, this convergence proof can be very loose and in practice, we found that the convergence of Prox-JADMM is much better than what the theory predicts, and the convergence rate does not vary much as the number of agents grows.
\begin{figure}
  \centering
  \includegraphics[width=0.9\columnwidth]{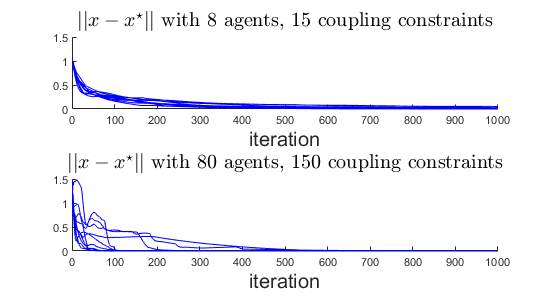}
  \caption{Convergence rate with 8 and 80 agents in 10 trials}\label{fig:convergence}
  \vspace{-0.6cm}
\end{figure}
Fig. \ref{fig:convergence} shows the convergence in 10 randomly generated trials with 8 and 80 agents, and the rate of convergence is similar.

\begin{rem}
  The proposed framework allows for a fully distributed implementation with a synchronized dual update, which can be achieved with a synchronized clock and time stamps. This is due to the fact that the coupling equality constraints are local.
\end{rem}

\section{Prox-JADMM for online decentralized decision-making}\label{sec:main}
In this section, we discuss the Prox-JADMM for online decentralized decision-making.
\vspace{-0.5cm}
\subsection{Quality of unconverged solution}
The optimization problem shown in \eqref{eq:relaxed_setup} falls into the multiblock ADMM framework, and as discussed in Section \ref{sec:ADMM}, in a fully decentralized setting, we use the Prox-JADMM introduced in \cite{deng2017parallel}. Under an online setting, Prox-JADMM may not converge in time, and the unconverged solution is then taken as the solution to the decision making problem after a final correction step, which simply solves the primal update with the current $y$ and without the proximal term:
\begin{equation}\label{eq:correction}
 \resizebox{0.87\hsize}{!}{$
 \forall i=1,...,N, \mathbf{x}_i= \mathop{\arg\min}\limits_{\mathbf{x}_i} \mathcal{L}_\rho(\mathbf{x}_1^k,,...,\mathbf{x}_{i-1}^k,\mathbf{x}_i,\mathbf{x}_{i+1}^k,...,\mathbf{x}_N^k,y^k)
 $}
\end{equation}
where $k$ is the last iteration.

In the setup presented in Section \ref{sec:setup}, each $\mathbf{x}_i$ contains $x_i$ and the local copies $x_j^i$ of its neighbors, $x_i$ is then taken as the local decision. Let $x\coloneqq[x_1^\intercal,...x_N^\intercal]^\intercal$ and let $y_{ij}$ be the Lagrange multiplier associated with the equality constraint $x_j^i=x_j$, we have the following theorem that quantifies the optimality gap.

\begin{thm}\label{thm:quality}
  Given any unconverged solution $\{\mathbf{x}_i\}$ from Prox-JADMM after the correction in \eqref{eq:correction}, the objective of \eqref{eq:relaxed_setup} $J(\mathbf{x})$ and the optimal solution $J^\star(\mathbf{x}^\star)$ satisfies
  \begin{equation}\label{eq:gap}
  \resizebox{0.87\hsize}{!}{$
  \begin{aligned}
    J(\mathbf{x})-J(\mathbf{x}^\star)\le &\sum\limits_{(s,A_s,b_s)\in\mathcal{C}}\frac{\beta}{|s|}\sum\limits_{i\in s}\sum\limits_{j\in s,j\neq i} |A_s^j|\cdot|x_j^i-x_j| \\
    &-\sum\limits_{i}\sum\limits_{j\in\mathcal{N}_i}\left( (y^k_{ij})^\intercal (x_j^i-x_j)+\frac{\rho}{2}||x_j^i-x_j||^2\right).
    \end{aligned}
    $}
  \end{equation}
\end{thm}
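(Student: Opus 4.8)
The plan is to pass through the augmented Lagrangian $\mathcal{L}_\rho$ of the consensus problem \eqref{eq:relaxed_setup} and split the claimed gap into two conceptually distinct pieces: a consensus-feasibility defect inside the piecewise-affine penalty, which produces the first sum in \eqref{eq:gap}, and a Lagrangian-optimality term coming from the correction step \eqref{eq:correction}, which produces the second sum. Writing $\mathcal{L}_\rho(\mathbf{x},y^k)=J(\mathbf{x})+\sum_i\sum_{j\in\mathcal{N}_i}\big((y^k_{ij})^\intercal(x_j^i-x_j)+\tfrac{\rho}{2}\|x_j^i-x_j\|^2\big)$, and noting that the optimizer $\mathbf{x}^\star$ of \eqref{eq:relaxed_setup} is consensus feasible, so $x_j^{i\star}=x_j^\star$ and hence $\mathcal{L}_\rho(\mathbf{x}^\star,y^k)=J(\mathbf{x}^\star)$, we obtain the identity $J(\mathbf{x})-J(\mathbf{x}^\star)=\big(\mathcal{L}_\rho(\mathbf{x},y^k)-\mathcal{L}_\rho(\mathbf{x}^\star,y^k)\big)-\sum_i\sum_{j\in\mathcal{N}_i}\big((y^k_{ij})^\intercal(x_j^i-x_j)+\tfrac{\rho}{2}\|x_j^i-x_j\|^2\big)$. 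Since the last sum is exactly the negative of the second term of \eqref{eq:gap}, the whole theorem reduces to the single inequality $\mathcal{L}_\rho(\mathbf{x},y^k)-\mathcal{L}_\rho(\mathbf{x}^\star,y^k)\le \sum_{(s,A_s,b_s)\in\mathcal{C}}\tfrac{\beta}{|s|}\sum_{i\in s}\sum_{j\in s,j\neq i}\mathds{1}^\intercal|A_s^j|\,|x_j^i-x_j|$.

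The first ingredient is a Lipschitz estimate on the penalty. In $F_i$ of \eqref{eq:aug_obj} agent $i$ evaluates each shared row of $\mathbf{A}^i\mathbf{x}_i-\mathbf{b}^i$ using its local copies $x_j^i$, whereas the ``true'' residual for a constraint $(s,A_s,b_s)$ is $\sum_{k\in s}A_s^k x_k-b_s$, differing from agent $i$'s residual only by $\sum_{j\in s,j\ne i}A_s^j(x_j^i-x_j)$. Because $u\mapsto\mathds{1}^\intercal\max\{\mathbf{0},u\}$ is convex and $1$-Lipschitz in the $\ell_1$ sense, the entrywise inequality $|\max\{0,a\}-\max\{0,b\}|\le|a-b|$ gives, for each $i\in s$, a per-constraint bound of $\tfrac{\beta}{|s|}\sum_{j\in s,j\ne i}\mathds{1}^\intercal|A_s^j|\,|x_j^i-x_j|$ on the difference between agent $i$'s local penalty and the penalty computed from the true variables. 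Summing over $i\in s$ and over $\mathcal{C}$ yields exactly the first term of \eqref{eq:gap}, and identifies it as the cost of replacing local copies by consensus-consistent values inside the penalty. This step is routine once the weights $w^i_j=1/|s_j|$ are tracked.

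The second ingredient is the optimality of the final correction \eqref{eq:correction}: each corrected block $\mathbf{x}_i$ minimizes the block-$i$ Lagrangian, so the minimizer (equivalently subgradient) inequality for the convex $F_i$, together with the stationarity of the linear-plus-quadratic consensus terms, bounds $F_i(\mathbf{x}_i)-F_i(\mathbf{x}_i^\star)$ by those consensus terms tested against $\mathbf{x}_i-\mathbf{x}_i^\star$. I would sum these block inequalities over $i$ and regroup them edge by edge; using $x_j^{i\star}=x_j^\star$, the multiplier contribution from the copy $x_j^i$ in block $i$ pairs with the one from the own variable $x_j$ in block $j$, and the quadratic contributions assemble into the $\tfrac{\rho}{2}$-weighted consensus norms. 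This is what reconstructs $\mathcal{L}_\rho(\mathbf{x},y^k)-\mathcal{L}_\rho(\mathbf{x}^\star,y^k)$ up to the penalty defect of the previous paragraph, closing the reduction.

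The main obstacle is the accounting in this last step. The correction \eqref{eq:correction} is a Jacobi sweep: block $i$ is minimized with the remaining blocks frozen at the previous iterate $\mathbf{x}^k$ rather than at the final corrected values, so the naive sum of the per-block minimizer inequalities reproduces the block-separable objective but leaves cross-block and old-iterate residuals that do not appear in \eqref{eq:gap}. The crux is to show that these residuals are precisely absorbed by the penalty Lipschitz bound, so that only the consensus multiplier terms and the $\tfrac{\rho}{2}$ quadratic terms survive; making the $1/|s|$ weights, the $\rho/2$ (rather than $\rho$) coefficient, and the edge pairing line up simultaneously, without double counting the penalty between the two ingredients, is the delicate part of the argument.
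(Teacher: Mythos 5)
Your overall route is the same as the paper's: you split the gap into a copy-versus-true-variable penalty defect (first term of \eqref{eq:gap}) and a Lagrangian-optimality term coming from the correction step (second term), and your first ingredient --- the $1$-Lipschitz bound on $u\mapsto\mathds{1}^\intercal\max\{\mathbf{0},u\}$ applied row by row with the $1/|s|$ weights --- is exactly the paper's first step, i.e., the bound on $J(\mathbf{x})-\sum_i F_i(\mathbf{x}_i)$. Your reduction through the augmented Lagrangian is only an algebraic rearrangement of that same two-part split, so there is no genuinely different idea in the first half.

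The genuine gap is in your second ingredient, and you name it yourself but do not close it. The paper dispatches this step in one line: since the $\mathbf{x}_i$ solve \eqref{eq:correction}, it asserts $\mathcal{L}(\mathbf{x}_1,\ldots,\mathbf{x}_N,y^k)\le\mathcal{L}(\mathbf{x}^\star_1,\ldots,\mathbf{x}^\star_N,y^k)=J(\mathbf{x}^\star)$, i.e., it treats the corrected point as a \emph{joint} minimizer of $\mathcal{L}(\cdot,y^k)$. You correctly observe that \eqref{eq:correction} is a Jacobi sweep --- each block is minimized with the others frozen at $\mathbf{x}^k$ --- and that block-wise optimality does not give joint optimality, because each consensus term $(y^k_{ij})^\intercal(x_j^i-x_j)+\tfrac{\rho}{2}\|x_j^i-x_j\|^2$ couples block $i$ (through $x_j^i$) with block $j$ (through $x_j$). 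But your proposed repair, that the leftover cross-block and old-iterate residuals are ``precisely absorbed by the penalty Lipschitz bound,'' cannot work as stated: those residuals involve the frozen pre-correction iterate $\mathbf{x}^k$, which appears nowhere in \eqref{eq:gap}, while the penalty Lipschitz bound controls a different quantity (the copy/true mismatch at the corrected point) and has already been spent producing the first term --- reusing it would be exactly the double counting you warn against. So your write-up establishes the first term but leaves the inequality $\sum_i F_i(\mathbf{x}_i)-J(\mathbf{x}^\star)\le-\sum_i\sum_{j\in\mathcal{N}_i}\bigl((y^k_{ij})^\intercal(x_j^i-x_j)+\tfrac{\rho}{2}\|x_j^i-x_j\|^2\bigr)$ unproven. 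To be fair, what you have uncovered is a real looseness in the paper's own argument: its one-line claim is rigorous only if the correction is read as a joint minimization of $\mathcal{L}(\cdot,y^k)$ rather than a single parallel sweep. But identifying the obstruction is not the same as overcoming it, and your proposal does not complete the proof.
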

\begin{proof}
  The gap can be split into two parts, $J(\mathbf{x})-\sum_i F_i(\mathbf{x}_i)$ and $\sum_i F_i(\mathbf{x}_i)-J(x^\star)$.  Let $\bar{\mathbf{x}}_i\coloneqq [x_i^\intercal,x_{j_1}^\intercal,...,x_{j_{N_i}}^\intercal]^\intercal$ be the vector of the true decision variables (as opposed to the local copies) of agent $i$ and its neighbors, by \eqref{eq:aug_obj}, we have
  \begin{equation*}
  \resizebox{1\hsize}{!}{$
  \begin{aligned}
    &J(\mathbf{x})=\sum\limits_i F_i(\bar{\mathbf{x}}_i),\\
    &\sum\limits_i F_i(\bar{\mathbf{x}}_i)-\sum\limits_i F_i(\mathbf{x}_i)\le \sum\limits_{(s,A_s,b_s)\in\mathcal{C}}\frac{\beta}{|s|}\sum\limits_{i\in s}\sum\limits_{j\in s,j\neq i} |A_s^j|\cdot|x_j^i-x_j|.
    \end{aligned}
    $}
  \end{equation*}
This gives rise to the first line of \eqref{eq:gap}.
Since $\mathbf{x}_i$s are the optimal solutions to the correction step in \eqref{eq:correction}, we have $\mathcal{L}(\mathbf{x}_1,...,\mathbf{x}_N,y^k)\le \mathcal{L}(\mathbf{x}^\star_1,...,\mathbf{x}^\star_N,y^k)=J(\mathbf{x}^\star),$
implying
\begin{equation*}
\resizebox{1\hsize}{!}{$\sum_i F_i(\mathbf{x}_i)-J(x^\star)\le -\sum\limits_{i}\sum\limits_{j\in\mathcal{N}_i}\left( (y^k_{ij})^\intercal (x_j^i-x_j)+\frac{\rho}{2}||x_j^i-x_j||^2\right).$}
\end{equation*}
Combining the two parts proves the bound.
\end{proof}
Theorem \ref{thm:quality} shows that the mismatch $x_j^i-x_j$ is an important indicator of the convergence of the algorithm. When the mismatch between local copies and the true values are small, the optimality gap is small.

\subsection{Online decision-making with Prox-JADMM}
Online decision-making is very common in engineering systems. Our motivating example is a control barrier function quadratic programming (QP), which is determined by the state of the agents and needs to be solved for every time step. Similar problems include the optimal power flow (OPF) problem for power systems and decentralized MPC. To distinguish the sequential decision making from the iterations of ADMM between time steps, we let $f_i[t]$, $\mathcal{C}[t]$ denote the local objective functions and the constraint set at time $t$. The ADMM algorithm would take multiple iterations to obtain a solution within one time step, then solve an updated problem with $f_i[t+1]$ and $\mathcal{C}[t+1]$. Typically the decision making problem satisfies some continuity condition, i.e., the problem would gradually change over time. To be more precise, let
 \begin{equation}\label{eq:F}
  F(x)=\sum\nolimits_{i} {f_i(x_i)}+\beta\mathds{1}^\intercal \max\{\mathbf{0},Ax-b\},
 \end{equation}
 be defined as the total objective on $\mathcal{X}\coloneqq \mathcal{X}_1\times...\times\mathcal{X}_N$. We begin with the following assumptions.
\begin{ass}\label{ass:continuity}
  The change of the local objective function $f_i$ and the constraint set $\mathcal{C}$ over successive time steps is bounded such that there exists $\kappa>0$ that $F[t+1]-F[t]$ is Lipschitz continuous with constant $\kappa$ within $\mathcal{X}$, i.e.,
\begin{equation*}
\begin{aligned}
&\forall t=0,1,...,~\forall x,x'\in \mathcal{X}, \\
&|(F[t+1]-F[t])(x')-(F[t+1]-F[t])(x)|\le\kappa ||x-x'||.
\end{aligned}
\end{equation*}
\end{ass}
\begin{ass}\label{ass:strong_conv}
  For all agents, the local objective function $f_i$ is twice differentiable and strongly convex with parameter $\sigma$, and $\nabla f_i$ is Lipschitz with constant $\nu$.
\end{ass}

\begin{prop}\label{prop:continuity}
  Under Assumptions \ref{ass:continuity}, \ref{ass:strong_conv}, let $x^\star[t]$ be the optimal solution to \eqref{eq:relaxed_setup} at time $t$, then $||x^\star[t+1]-x^\star[t]||\le \sigma^{-1} \kappa$.
\end{prop}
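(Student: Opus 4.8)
The plan is to exploit the $\sigma$-strong convexity of the time-$t$ objective together with a first-order optimality condition at each minimizer, and to cancel the common quadratic terms so that only the \emph{difference} $F[t+1]-F[t]$ survives, which Assumption \ref{ass:continuity} controls. First I would observe that at consensus (i.e.\ $x_j^i=x_j$ for all $i,j$) the decentralized objective $\sum_i F_i(\mathbf{x}_i)$ in \eqref{eq:relaxed_setup} coincides with the centralized objective $F$ in \eqref{eq:F}, so $x^\star[t]$ is equivalently the minimizer of $F[t]$ over the convex set $\mathcal{X}$. Since each $f_i$ is $\sigma$-strongly convex (Assumption \ref{ass:strong_conv}) and the piecewise-affine penalty $\beta\mathds{1}^\intercal\max\{\mathbf{0},Ax-b\}$ is convex, $F[t]$ is $\sigma$-strongly convex on $\mathcal{X}$ for every $t$.

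The main computation is a standard two-point estimate. For a $\sigma$-strongly convex $F[t]$ minimized at $x^\star[t]$ over the convex set $\mathcal{X}$, there is a subgradient $g\in\partial F[t](x^\star[t])$ with $g^\intercal(x-x^\star[t])\ge 0$ for all $x\in\mathcal{X}$; combining this with the strong-convexity inequality yields the quadratic-growth bound
\begin{equation*}
F[t](x)-F[t](x^\star[t])\ge \tfrac{\sigma}{2}\|x-x^\star[t]\|^2,\qquad \forall x\in\mathcal{X}.
\end{equation*}
I would apply this at time $t$ with $x=x^\star[t+1]$ and at time $t+1$ with $x=x^\star[t]$, then add the two inequalities. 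The four objective values regroup exactly as $(F[t+1]-F[t])(x^\star[t])-(F[t+1]-F[t])(x^\star[t+1])$ on the left, while the right side becomes $\sigma\|x^\star[t+1]-x^\star[t]\|^2$.

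To finish, Assumption \ref{ass:continuity} bounds the left side by $\kappa\|x^\star[t+1]-x^\star[t]\|$, yielding $\kappa\|x^\star[t+1]-x^\star[t]\|\ge \sigma\|x^\star[t+1]-x^\star[t]\|^2$; dividing by $\sigma\|x^\star[t+1]-x^\star[t]\|$ (the case of coincident minimizers being trivial) gives the claimed $\|x^\star[t+1]-x^\star[t]\|\le \sigma^{-1}\kappa$. The step I expect to require the most care is the first-order optimality condition used in the quadratic-growth bound: because the penalty term is nondifferentiable at points where a coupling constraint is active, I must work with a subgradient of $F[t]$ rather than a gradient, and invoke the constrained (projected) optimality condition over $\mathcal{X}$ rather than simply setting a gradient to zero. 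I would also note that the Lipschitz-gradient half of Assumption \ref{ass:strong_conv} is not actually needed here; only the strong convexity of $F[t]$ and the Lipschitz continuity of the increment $F[t+1]-F[t]$ enter the argument.
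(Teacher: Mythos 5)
Your proof is correct, but it takes a genuinely different route from the paper. The paper's proof is a one-line appeal to Proposition 4.32 of Bonnans and Shapiro's perturbation-analysis framework: strong convexity implies the second-order growth condition at the time-$t$ minimizer, the increment $F[t+1]-F[t]$ is the Lipschitz perturbation, and the bound follows by letting the optimality tolerance $\epsilon$ go to zero. You instead give a self-contained two-point argument: quadratic growth at \emph{both} minimizers (via the subgradient form of strong convexity plus the variational inequality $g^\intercal(x-x^\star)\ge 0$ over $\mathcal{X}$), summed so that the quadratic terms cancel and only the increment survives, which Assumption 1 bounds. What each buys: the cited proposition is more general (it needs growth only at the reference minimizer and covers $\epsilon$-optimal perturbed solutions, so the perturbed problem need not even be convex), whereas your symmetric argument exploits that both $F[t]$ and $F[t+1]$ are $\sigma$-strongly convex. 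Notably, your version delivers the stated constant $\sigma^{-1}\kappa$ exactly: under the standard correspondence, $\sigma$-strong convexity yields the growth condition $F(x)-F(x^\star)\ge\frac{\sigma}{2}\|x-x^\star\|^2$ (constant $\sigma/2$, not $\sigma$), and the one-sided argument behind the cited proposition then gives only $2\sigma^{-1}\kappa$; the factor of two is recovered precisely because you use growth at both time steps. Your side remarks are also accurate --- the reduction of \eqref{eq:relaxed_setup} to the centralized objective $F$ at consensus is indeed the needed first step, the nondifferentiability of the penalty forces the subgradient/normal-cone form of optimality, and the Lipschitz-gradient half of Assumption 2 plays no role here.
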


\begin{proof}
  Proposition \ref{prop:continuity} is directly adopted from Proposition 4.32 from \cite{bonnans2013perturbation}. Since $f_i$s are strongly convex with parameter $\sigma$, $F$ is also strongly convex with parameter $\sigma$, thus satisfies the second order growth condition with parameter $\sigma$. Then the result follows by taking $\epsilon$ to 0.
\end{proof}

Proposition \ref{prop:continuity} gives continuity condition for the optimal solution of \eqref{eq:relaxed_setup}, but the dual variable is also critical to the convergence of ADMM. Although strong duality guarantees the uniqueness of $x^\star$, the dual variables that satisfy the Karush-Kuhn-Tucker (KKT) condition may not be unique, which is mainly due to the nondifferentiability of the $\max$ function. One solution is to use a smooth approximation of the $\max$ function, for example, with the \textit{LogSumExp} function. The smooth approximation of $F_i$ is defined as
\begin{equation}\label{eq:aug_obj_smooth}
\bar{F}_i(\mathbf{x}_i)= {f_i(x_i)}+\frac{\beta}{c}(w^i)^\intercal \log (e^{c(\mathbf{A}^i\mathbf{x}_i-\mathbf{b}^i)}+\mathds{1}),
\end{equation}
where $\exp$ and $\log$ are taken entrywise and $c>0$ is a parameter that tunes the smoothness of the approximation. The larger $c$ is, the closer the approximation is to the original function. The total objective after smoothing is then
\begin{equation}\label{eq:total_obj_smooth}
  \bar{F}(x)=\sum\nolimits_{i=1}^N f_i(x_i)+\frac{\beta}{c}\mathds{1}^\intercal \log(e^{c(Ax-b)}+\mathds{1}).
\end{equation}

\begin{prop}\label{prop:smooth_continuity}
  Under Assumptions \ref{ass:strong_conv} and assume $\bar{F}[t+1]-\bar{F}[t]$ is Lipschitz with constant $\kappa$ for all $t$, let $x^\star[t],y^\star[t]$ be the solution of \eqref{eq:relaxed_setup} with $F_i$ replaced by $\bar{F}_i$ defined in \eqref{eq:aug_obj_smooth} at time $t$, let $|s|^{\max}$ be the maximum number of agents coupled by a single constraint in $\mathcal{C}$, then $||x^\star[t+1]-x^\star[t]||\le \sigma^{-1} \kappa$, and $||y^\star[t+1]-y^\star[t]||<||A||_F^2 \frac{|s|^{\max}-1}{|s|^{\max}}\beta c \sigma^{-1} \kappa$.
\end{prop}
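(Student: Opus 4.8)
The plan is to treat the two bounds separately, dispatching the primal bound by the same perturbation argument used for Proposition~\ref{prop:continuity} and then reducing the dual bound to a Lipschitz estimate of the smoothed penalty gradient.

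For the primal bound, I would first observe that the smoothed objective $\bar F$ in \eqref{eq:total_obj_smooth} inherits strong convexity with parameter $\sigma$: each $f_i$ is $\sigma$-strongly convex by Assumption~\ref{ass:strong_conv}, and the \textit{LogSumExp} term $\frac{\beta}{c}\mathds{1}^\intercal\log(e^{c(Ax-b)}+\mathds{1})$ is convex, so it cannot lower the modulus of strong convexity. Hence $\bar F$ satisfies the second-order growth condition with parameter $\sigma$, and the assumed $\kappa$-Lipschitz continuity of $\bar F[t+1]-\bar F[t]$ lets me invoke Proposition 4.32 of \cite{bonnans2013perturbation} exactly as in the proof of Proposition~\ref{prop:continuity}, giving $\|x^\star[t+1]-x^\star[t]\|\le\sigma^{-1}\kappa$.

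For the dual bound, the key idea is that the multipliers are not free but are pinned to gradients of the smoothed penalty. Writing the Lagrangian of \eqref{eq:relaxed_setup} with $\bar F_i$ in place of $F_i$ and the consensus multipliers $y_{ij}$, stationarity with respect to each local copy $x_j^i$ (which enters $\bar F_i$ only through the penalty term) yields $y^\star_{ij}=-\nabla_{x_j^i}\bar F_i(\mathbf{x}_i^\star)$. At optimality the consensus constraints hold, so each $\mathbf{x}_i^\star$ coincides with the stacked vector of true variables $\bar{\mathbf{x}}_i^\star$ from the proof of Theorem~\ref{thm:quality}, and the argument of the penalty equals the genuine slack $Ax^\star-b$. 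Differentiating \eqref{eq:aug_obj_smooth}, this gradient is $\beta$ times a weighted selection (through the weights $w^i$ and the blocks $A_s^j$) of the entrywise $\mathrm{sigmoid}(c(Ax^\star-b))$; the dependence of this sigmoid on $x$ carries a factor $c$ from the chain rule while the sigmoid derivative itself stays bounded (by $1/4$), which is where the factor $\beta c$ enters.

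The remaining step is to convert these gradient identities into the stated bound. I would write $y^\star[t+1]-y^\star[t]$ blockwise and bound each $\|y^\star_{ij}[t+1]-y^\star_{ij}[t]\|$ by the Lipschitz constant of the sigmoid times the change in slack; since $A,b$ fix the coupling structure and the copies equal the true variables, the change in slack is controlled by $\|x^\star[t+1]-x^\star[t]\|$. Aggregating over all ordered neighbor pairs $(i,j)$ sharing a constraint $s$ produces $|s|(|s|-1)$ contributions, each carrying the squared weight $1/|s|^2$ and a block norm $\|A_s^j\|^2$; summing these, bounding the per-constraint weight factor by $\frac{|s|^{\max}-1}{|s|^{\max}}$ and the accumulated block norms by $\|A\|_F^2$, gives the Lipschitz constant $\|A\|_F^2\frac{|s|^{\max}-1}{|s|^{\max}}\beta c$ of the map $x^\star\mapsto y^\star$, and combining with $\sigma^{-1}\kappa$ yields the claim. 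I expect this last bookkeeping — aggregating the per-pair sigmoid estimates with the correct $w^i$ weights so that exactly $\|A\|_F^2\frac{|s|^{\max}-1}{|s|^{\max}}$ survives — to be the main obstacle, together with justifying that the domain constraints $\mathcal{X}_i$ are inactive (or otherwise do not enter the copy stationarity) so that $y^\star_{ij}=-\nabla_{x_j^i}\bar F_i$ holds cleanly.
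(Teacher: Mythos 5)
Your proposal matches the paper's proof in both halves: the primal bound via strong convexity of $\bar F$ and Proposition 4.32 of \cite{bonnans2013perturbation}, and the dual bound via the stationarity identity $y_{ij}^\star=-\nabla_{x_j^i}\bar F_i(\mathbf{x}_i^\star)$, a Lipschitz estimate of the smoothed-penalty gradient carrying the factor $\beta c$ (bounded sigmoid derivative plus chain rule), and aggregation over neighbor pairs and constraints to reach $\|A\|_F^2\frac{|s|^{\max}-1}{|s|^{\max}}\beta c\,\sigma^{-1}\kappa$. One small bookkeeping correction: the weight $w^i$ enters each pair's Lipschitz constant linearly as $1/|s|$, and each variable $x_j$ is copied by the other $|s|-1$ agents, which yields the per-constraint factor $(|s|-1)/|s|$ exactly as in the paper --- not a squared weight $1/|s|^2$ summed over $|s|(|s|-1)$ pairs, though your two miscounts cancel to the same final factor.
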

\begin{proof}
Following a similar argument made in the proof of Proposition \ref{prop:continuity}, $||x^\star[t+1]-x^\star[t]||\le \sigma^{-1} \kappa$. The dual variable corresponds to the equality constraint such that all the local copies are equal to the actual variable. Let $y_{ij}$ be the dual variable of the constraint $x_j^i=x_j$, then $y^\star_{ij}[t]$ satisfies $
    y^\star_{ij}[t]= -\frac{\partial\bar{F}_i(\mathbf{x}_i^\star[t])}{\partial x_j^i}$.
  The Hessian of the partial derivative is $\frac{\partial^2 \bar{F}_i}{\partial (x_j^i)^2}=\sum_{n=1}^{m_i}\frac{e^{\mathbf{A}_n^i\mathbf{x}_i-\mathbf{b}_n^i}\beta c}{(e^{\mathbf{A}_n^i\mathbf{x}_i-\mathbf{b}_n^i}+1)^2 |s_n|} (\mathbf{A}^i_{n,j})^\intercal \mathbf{A}^i_{n,j} $, where $\mathbf{A}^i_n$ is the $n$th row of $\mathbf{A}^i$, and $\mathbf{A}_{n,j}^i$ is the block corresponding to $x_j^i$. This implies that $\frac{\partial\bar{F}_i}{\partial x_j^i}$ has a Lipschitz constant less than
  $\sum_{n=1}^{m_i}\frac{\beta c}{ |s_n|} ||\mathbf{A}^i_{n,j}||_F^2$, i.e., $||y^\star_{ij}[t-1]-y^\star_{ij}[t]||\le \sum_{n=1}^{m_i}\frac{\beta c}{ |s_n|} ||\mathbf{A}^i_{n,j}||_F^2||x_j^\star[t+1]-x_j^\star[t]||$. Sum up for all $i,j$,
  \begin{equation*}
  \resizebox{1\hsize}{!}{$
    ||y^\star[t+1]-y^\star[t]||< \sum\limits_{(s,A_s,b_s)\in\mathcal{C}}\frac{(|s|-1)\beta c}{|s|}\sum_{i\in s} ||A^i_s||_F^2 ||x_i^\star[t+1]-x_i^\star[t]||.
    $}
  \end{equation*}
  Using Cauchy-Schwartz inequality, we get

  \resizebox{0.88\hsize}{!}{$
    \begin{aligned}
    ||y^\star[t+1]-y^\star[t]||&<||A||_F^2 \frac{|s|^{\max}-1}{|s|^{\max}}\beta c||x^\star[t+1]-x^\star[t]||\\
    &\le||A||_F^2 \frac{|s|^{\max}-1}{|s|^{\max}}\beta c \sigma^{-1} \kappa.
    \end{aligned}
    $}
\end{proof}

\begin{figure}
  \centering
  \includegraphics[width=0.9\columnwidth]{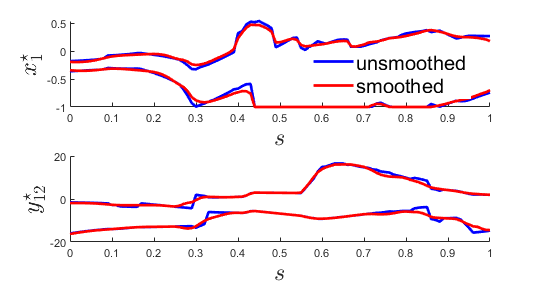}
  \caption{Optimal primal and dual values with and without \textit{LogSumExp} smoothing}\label{fig:smoothing}
  \vspace{-0.5cm}
\end{figure}

In practice, we found that smoothing may be unnecessary if the dual variable in the original problem is smooth enough w.r.t. the problem data. Fig. \ref{fig:smoothing} shows the result of a numeric experiment with 8 agents, each $x_i\in\mathbb{R}^2$, $f_i$ quadratic, and the coupling constraints are pairwise. We let the coupling constraint $Ax\le b$ gradually change from the initial value to the final value via linear interpolation: $A=\lambda A^1+(1-\lambda)A^2,b=\lambda b^1+(1-\lambda)b^2$, and let $\lambda$ change from 0 to 1, where $A^1$ and $A^2$ share the same sparsity pattern. Fig. \ref{fig:smoothing} shows the optimal value of $x_1$ and $y_{12}$ as $\lambda$ changes. The smoothing made both $x^\star$ and $y^\star$ smoother, yet the unsmoothed curve is already continuous with a reasonably small Lipschitz constant.
\begin{rem}
The disadvantage of smoothing in this case is that it makes the local optimization from a quadratic programming to an optimization with the \textit{LogSumExp} function, which is still convex, but there lacks specialized solvers for it. Therefore, when $y$ is smooth enough w.r.t. the problem data without smoothing, we choose to not use smoothing.
\end{rem}

The sensitivity analysis above shows that when the problem data $\{f_i\}$ and $\mathcal{C}$ change gradually over time, the optimal value also changes continuously. This motivate us to warm start the ADMM iteration with the solution of both the primal and dual variables (though unconverged) from last the iteration. In \cite{deng2017parallel}, convergence was proved by showing the distance from the current solution to the optimal solution is monotonically decreasing. Although no quantitative analysis of the convergence rate was given, one would expect that if the change of the optimal solution between successive time steps is bounded, the convergence of ADMM can catch up and maintain a certain maximum distance to the optimal solution, which is demonstrated in Section \ref{sec:result}.

\begin{algorithm}[b]
    \caption{Warm start online Prox-JADMM}
    \label{alg:seq_ADMM}
    \begin{algorithmic}[1] % The number tells where the line numbering should start
        \State $t\gets 0$, $k\gets 0$
        \State $\{\mathbf{x}^0_i\}\gets \mathbf{0}$, $y^0\gets \mathbf{0}$
        \While {\textbf{NOT} TERMINATE}
            \State Obtain $\{F_i\}[t]$, calculate $\mathcal{L}$
            \For {$k=0:M$}
                \State Prox-JADMM update following \eqref{eq:prox_admm}
            \EndFor
            \State Correction step following \eqref{eq:correction}
            \State $t\gets t+1$, $k\gets 0$, $\{\mathbf{x}^0_i\}\gets \{\mathbf{x}^M_i\}$, $y^0\gets y^M$
        \EndWhile
    \end{algorithmic}
\end{algorithm}
The warm start online Prox-JADMM is summarized in Algorithm \ref{alg:seq_ADMM}. $M$ is the maximum number of iterations allowed between time steps (determined by the computation speed and communication speed). At the beginning of each time step, the primal and dual variables are warm started with the (possibly unconverged) values from the last time step.
\vspace{-0.3cm}
\section{Application to multiagent CBF}\label{sec:result}
The motivating example is a multiagent control barrier function (CBF) problem that arises in the control of autonomous vehicles. Control barrier function \cite{ames2016control} considers a control affine dynamic system described by a differential equation
\begin{equation}\label{eq:dyn}
  \dot{x}=f(x)+g(x)u, x\in\mathbb{R}^n,u\in\mathcal{U}\subseteq \mathbb{R}^m,
\end{equation}
where $x$ is the state, subject to a safety constraint $x\notin\mathcal{X}_d$, $u$ is the input.
\begin{rem}
To avoid confusion, $x$ denotes system state, following the convention of control literature, the input $u$ is the decision variable of the ADMM problem introduced later.
\end{rem}
A CBF is then a function $h:\mathbb{R}^n\to\mathbb{R}$ that satisfies
\begin{equation}\label{eq:CBF}
  \begin{aligned}
&\forall~ x \in {\mathcal{X}_0},&h(x) \ge 0\\
&\forall~ x \in {\mathcal{X}_d},&h(x) < 0\\
&\forall~ x \in \left\{ {x\mid h(x) \ge 0} \right\}, &\exists~ u \in \mathcal{U}\;
\mathrm{s.t.}~\dot h + \alpha \left( h \right) \ge 0,
\end{aligned}
\end{equation}
where $\mathcal{X}_0$ is the set of initial states, $\alpha(\cdot)$ is a class-$\mathcal{K}$ function, i.e., $\alpha(\cdot)$ is strictly increasing and satisfies $\alpha(0)=0$. The last condition is called the CBF condition, which guarantees that any state within the safe set ($h(x)\ge 0$) remains safe, and any state outside the safe set converges back to the safe set exponentially. Given a legacy controller $u^0:\mathbb{R}^n\to\mathcal{U}$, the following CBF QP solves for the minimum intervention over $u^0(x)$ while enforcing the CBF condition:
\begin{equation}\label{eq:CBF_QP}
  \begin{aligned}
u^\star = \mathop {\arg \min }\limits_{u \in \mathcal{U}} &\left\| {u - {u^0}(x)} \right\|^2\\
\mathrm{s.t.}~~&\nabla h\cdot (f(x)+g(x)u) + \alpha \left( h \right) \ge 0,
\end{aligned}
\end{equation}
which is a QP subject to a linear inequality constraint. The feasibility of \eqref{eq:CBF_QP} when $h(x)\ge 0$ is guaranteed if the set $\{x|h(x)\ge 0\}$ is a control invariant set. This is sufficient to guarantee safety, since under \eqref{eq:CBF_QP}, any nonnegative $h$ will remain nonnegative. In particular, \cite{chen2020guaranteed} presents a CBF based on backup policies, which is applicable to multiagent problems and the feasibility of the decentralized CBF QP is guaranteed when $h\ge 0$. To be specific, consider $N$ agents with dynamics $x_i=f_i(x_i)+g_i(x_i)u_i$, where $x_i$ and $u_i$ are the state and input of agent $i$. The CBF for multiagent collision avoidance can be decomposed as follows:
\begin{equation}\label{eq:CBF_decom}
  h(x_1,...x_N)=\mathop{\min}\limits_{i=1,..,N} \left[h_i(x_i),\mathop{\min}\limits_{j\in\mathcal{N}_i}h_{ij}(x_i,x_j)\right],
\end{equation}
where $h_i$ is the local CBF for agent $i$, $h_{ij}$ is the CBF for obstacle avoidance between agent $i$ and $j$, defined for all of agent $i$'s neighbors $\mathcal{N}_i$. The multiagent CBF QP is then
\begin{equation}\label{eq:cen_CBF}
\resizebox{1\hsize}{!}{$
  \begin{aligned}
 \mathop { \min }\limits_{u_1,..,u_N} &\sum_i\left\| {u_i - {u^0_i}} \right\|^2\\
\mathrm{s.t.}~~&\forall i, \nabla h_i\cdot f_i( {x_i,u_i} ) + \alpha \left( h_i \right) \ge 0,\\
&\forall j\in\mathcal{N}_i,~\nabla h_{ij}\cdot\left( f_i( x_i)+g(x_i)u_i+f_j(x_j)+g(x_j)u_j \right) + \alpha \left( h_{ij} \right) \ge 0,
\end{aligned}
$}
\end{equation}
which falls into the form of \eqref{eq:setup}. It is shown in \cite{chen2020guaranteed} that when all the $h_i$'s and $h_{ij}$'s are positive, \eqref{eq:cen_CBF} is always feasible, and a feasible solution can be obtained by a fully decentralized optimization. However, when $h<0$, there is no guarantee of feasibility, and the decentralized algorithm without communication proposed in \cite{chen2020guaranteed} may perform badly. \cite{PCCA} proposed a presumed cooperation algorithm that also uses local copies, but consensus cannot be achieved without communication. This motivates us to use ADMM to improve the performance via consensus. As mentioned in Section \ref{sec:setup}, the goal is to minimize the violation of the CBF condition. The local optimization for agent $i$ is then the following:
\begin{equation}\label{decen_CBF}
\resizebox{1\hsize}{!}{$
   \begin{aligned}
\mathop {\min }\limits_{\mathbf{u}_i} &\left\| {u_i - {u^0_i}} \right\|^2+\beta\max\{0,-\nabla h_i\cdot f_i( {x_i,u_i} ) - \alpha \left( h_i \right)\}\\
&+\frac{\beta}{2}\sum\limits_{j\in\mathcal{N}_i}\max\{\mathbf{0},-\nabla h_{ij}\cdot(f_i(x_i)+g_i(x_i)u_i+f_j(x_j)+g_j(x_j)u^i_j) - \alpha \left( h_{ij} \right)\}\\
\mathrm{s.t.}&~ u_i\in\mathcal{U}_i
\end{aligned}
$}
\end{equation}
where $\mathbf{u}_i$ contains $u_i$ and $u_j^i$, the local copies of the input of all its neighbors. Note that the penalty for violating $h_{ij}$s are multiplied by $\frac{1}{2}$ since every $h_{ij}$ is shared by two agents, thus the penalty is also shared by two agents.

The CBF condition is determined by the state of the agents, changing gradually over time. Therefore, it is straightforward to find the Lipschitz constant that characterizes the continuity of the online decision-making problem. The strong convexity comes from the quadratic local cost of each agent.

\begin{rem}
  Comparing to the cooperative control methods, the Prox-JADMM CBF approach is fully decentralized, the agents do not know the total number of agents, they simply exchange messages with their neighbors and solve the local update in a synchronous and parallel fashion. Moreover, The agents do not need to share their local objective functions.
\end{rem}

We consider the merging case in autonomous driving, in which the CBF may start negative since the vehicles on the two lanes can detect each other only when they are close to the merging point, and their current path may lead to collision. Dubin's car model is used for the vehicle dynamics with acceleration and yaw rate as input. The time step for updating the control input is 50ms, and the maximum iteration $M$ for the Prox-JADMM is 30, chosen based on the sampling time for control update. The CBF QP only includes $h_i$ and $h_{ij}$ that are smaller than a threshold (large $h$ indicates the constraint is satisfied with a big margin), which localizes the problem and bounds the size of the local optimization.
\begin{figure}[t]
  \centering
  \includegraphics[width=1\columnwidth]{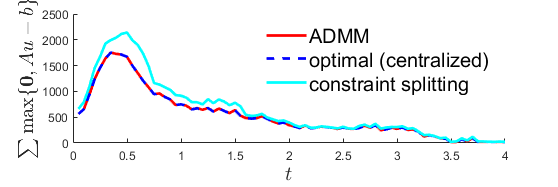}
  \caption{Total constraint violation of the constraint splitting method used in \cite{chen2020guaranteed}, the Prox-JADMM, and the optimal solution solved with centralized optimization}\label{fig:merge_comparison}
  \vspace{-0.5cm}
\end{figure}
Fig. \ref{fig:merge_comparison} shows the performance comparison of the Prox-JADMM algorithm with the benchmark in \cite{chen2020guaranteed} and the optimal solution. The total constraint violation of Prox-JADMM is always smaller than the benchmark, and very close to the optimal solution, as Theorem \ref{thm:quality} predicts.

Fig. \ref{fig:warmstart} shows the effect of warm start where $\|\Delta u\|\doteq\sum_{i,j} ||u_j^i-u_j||$, the violation of the consensus constraint, which is shown to be an important indicator of the convergence of ADMM.  The blue and red curves show the consensus violation at the beginning and at the end (after M iterations) of the ADMM iterations, respectively. The warm start help reducing $\Delta u$ after the update of the optimization problem and ensures good convergence of the ADMM, whereas the convergence can be bad without the warm start.

A video of the highway sim can be found \href{https://youtu.be/9qjHuDkpd2E}{youtu.be/9qjHuDkpd2E} and a python realization of the Prox-JADMM algorithm with linear inequality constraint can be found \href{https://github.com/chenyx09/LCADMM}{github.com/chenyx09/LCADMM}.
\begin{figure}[t]
  \centering
  \includegraphics[width=0.9\columnwidth]{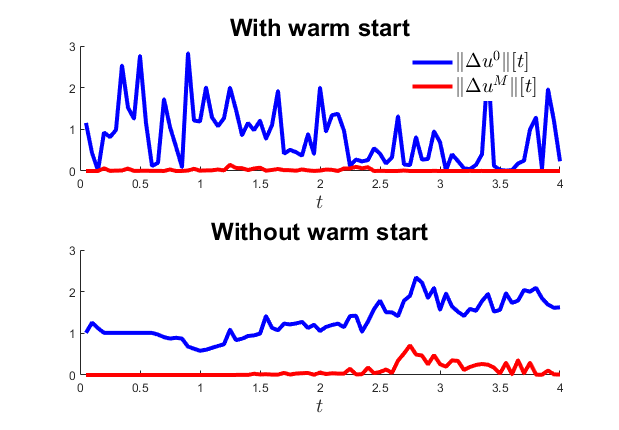}
  \caption{Effect of warm start on convergence}\label{fig:warmstart}
  \vspace{-0.5cm}
\end{figure}
\section{Conclusion}\label{sec:conclusion}
We present a Prox-JADMM algorithm for online decentralized decision making with coupling inequality constraints. We show that under gradual change of the problem data, the optimal solution changes continuously, and warm starting the ADMM helps accelerate convergence between time steps. The algorithm is applied to a decentralized control barrier function problem and the simulation shows that the decentralized algorithm achieves near optimal performance.
\balance
\renewcommand{\baselinestretch}{0.9}
\bibliographystyle{myieeetran}
\bibliography{my_bib}
\end{document}